\newtheorem{theorem}{Theorem}[section]
\newtheorem{lemma}[theorem]{Lemma}
\newtheorem{definition}[theorem]{Definition}
\def\squarebox#1{\hbox to #1{\hfill\vbox to #1{\vfill}}}
\newcommand{\qed}{\hspace*{\fill}
\vbox{\hrule\hbox{\vrule\squarebox{.667em}\vrule}\hrule}\smallskip}
\newenvironment{proof}{\noindent{\bf Proof:~~}}{\(\qed\)}
\newcommand{\ignore}[1]{}
\begin{document}

\title{A Note on the Power of Truthful Approximation Mechanisms}

\author{Shahar Dobzinski   \thanks{Supported by the Adams Fellowship
   Program of the Israel Academy of Sciences and Humanities, and by a
   grant from the Israeli Academy of
   Sciences. Email: {\tt shahard@cs.huji.ac.il}.}}

  \maketitle
\begin{abstract}
We study the power of polynomial-time truthful mechanisms comparing to polynomial time (non-truthful) algorithms. We show that there is a setting in which deterministic polynomial-time truthful mechanisms cannot guarantee a bounded approximation ratio, but a non-truthful FPTAS exists. We also show that in the same setting there is a universally truthful randomized mechanism that provides an approximation ratio of $2$. This shows that the cost of truthfulness is unbounded. The proofs are almost standard in the field and follow from known results.
\end{abstract}

\section{Introduction}

A large body of the literature in algorithmic mechanism design deals with understanding the power of polynomial time truthful mechanisms comparing to the power of polynomial time non-truthful algorithms. In other words, given a polynomial time approximation algorithm that provides an approximation ratio of $ \alpha$, determine if there is a polynomial time truthful mechanism that provides at least the same approximation ratio. Lavi, Mu'alem, and Nisan \cite{LMN03} gave the first negative answer to this question: there exists a setting that admits an FPTAS ignoring incentives, but truthful polynomial time mechanisms cannot provide an approximation ratio better than 2.

Recently, some papers attempted to determine the size of the gap. That is, is there a setting where a poly time (non-truthful) algorithm provides an approximation ratio of $ \alpha$, but no truthful algorithm provides an approximation ratio of $ c\cdot \alpha$, where $ c$ is very large? Papadimitriou, Singer, and Schapira \cite{PSS08} presented the problem of combinatorial public projects and showed that for this problem there exists an approximation algorithm that provides a constant approximation ratio, while no truthful algorithm can provide an approximation ratio better than $ \sqrt m$, thus showing that $ c=\Omega(\sqrt m)$. In this note we will completely answer this question and show a setting where there is an FPTAS, but poly time truthful mechanisms cannot guarantee any bounded approximation ratio. The setting is an artificial one designed for the sole purpose of proving the result. Providing a ''natural'' setting that proves such a separation remains a question of some interest. We also show that poly time universally truthful randomized mechanisms provide an approximation ratio of $2$ for this setting, thus also establishing that the gap between the power of randomized poly-time mechanisms and deterministic ones is unbounded too. We note that although our results are communication-complexity based, establishing computational-complexity hardness can be done very similarly.

\vspace{0.1in}\noindent \textbf{The Setting}: in an \emph{extended multi-unit auction} we have $ m$ identical items and $ 2$ bidders. Each bidder $ i$ has a valuation function $ v_i$ that gives a value for each number of the items. Each $ v_i$ is almost monotone: for each bidder $ i$ there exists some $ k_i$, $ 0<k_i \leq m$, such that $v(m)\geq v(m-1)\geq ...\geq v(k_i+1)\geq v(k_i-1)\geq v(k_i-2)\geq ...\geq v(1)$. The valuations are normalized: $v(0)=0$. The goal is to maximize the welfare given allocations that always allocate all items: find an allocation $(t,m-t)$ that maximizes $ v(t)+v(m-t)$.

\vspace{0.1in}\noindent The private information of each bidder is $ v_i$ (notice that the $ k_i$'s are private information too). We will assume that the $ v_i$'s are given to us as black boxes, and will be interested in algorithms that run in time polynomial in $ \log m$ (why? The short answer is that we have a (non-truthful) algorithm that provides a $ (1+\epsilon)$-approximation in time $ poly(\log m, \frac 1 \epsilon)$, so the running time of our truthful mechanisms should be polynomially related. See Dobzinski and Nisan \cite{DN07b} for further motivation.)

\section{Preliminaries: Affine Maximizers and Maximal-in-Range Algorithms}

Arguably the main positive result of mechanism design is the VCG payment scheme. In this scheme we obtain an optimal solution $ (o_1,\ldots,o_n)$, and allocate accordingly. Each bidder $ i$ is being paid $ \Sigma_{j\neq i} v_i(o_i)$. Thus, the profit of each bidder $ i$ equals to the value of the optimal solution: $ v_i(o_i)+\Sigma_{j\neq i} v_i(o_i)$. Observe that truthfulness is a dominant strategy for each bidder: by deviating the bidder might enforce a suboptimal allocation, his profit will then be equal to the welfare of this suboptimal profit, which results in a decreased profit. What about the running time? At a first glance, it looks that VCG does not help much in constructing polynomial-time mechanisms, as it requires finding the optimal solution, and finding the optimal solution is computationally hard. One naive solution might be to use an approximation algorithm that finds an approximate solution $ (s_1,\ldots,s_n)$, and uses VCG payments with respect to this approximate solution. I.e., the mechanism pays each bidder $ i$ $ \Sigma_{j\neq i} v_i(s_i)$. Unfortunately, in general, using VCG payments together with an arbitrary approximation algorithm does not result in a truthful mechanism. There are, however, some approximation algorithms that are truthful when used with the VCG payment scheme:

\begin{definition}
An algorithm is called maximal in range (MIR) if there exists a subset $ \mathcal R$ of allocations (the range of the algorithm), such that for every possible input $ v_1,\ldots, v_n$, the algorithm outputs the allocation that maximizes the welfare in $ \mathcal R$. I.e., for all input valuations $ v_1,\ldots,v_n$ the algorithm outputs $ \arg\max _{(S_1,\ldots,S_n)\in\mathcal R}\Sigma_iv_i(S_i)$.
\end{definition}

Of course, we can also implement a weighted versions of VCG:

\begin{definition}
An algorithm is an affine maximizer if there exists a subset $ \mathcal R$ of allocations, a constant $ w_i$ for each bidder $ i$, and a constant $ c_R$ for each $ R\in \mathcal R$ such that for all input valuations $ v_1,\ldots,v_n$ the algorithm outputs $ \arg\max _{(S_1,\ldots,S_n)\in\mathcal R}\Sigma_iw_iv_i(S_i)+C_R$.
\end{definition}

We will show that there are settings in which all truthful mechanisms are affine maximizers.

\section{The Power of Deterministic Poly-Time Mechanisms vs Poly-Time Algorithms}

In this section we deal only with deterministic mechanisms. In the second section we will consider randomized ones. The proof is similar to the proof of Dobzinski and Nisan \cite{DN07b} for the setting of multi-unit auctions. We will show that all truthful algorithms for extended multi-unit auctions are affine maximizers. Then we will show that affine maximizers cannot provide a finite approximation ratio in polynomial time. The last part follows easily from \cite{DN07b}. This will conclude the proof.

The proof is slightly novel from a technical point of view: previous proofs of gaps for other settings obtain a characterization "from scratch" using Roberts-like machinery. In contrast, we obtain a characterization by a reduction to a known characterization result.

We require some notation. Let $ D$ be the set of all valuations of extended multi-unit auctions. Let $ D'$ be the set of all monotone and normalized valuations ($ v(0)=0$, and $ v(m)\geq v(m-1)\geq ...\geq v(1)\geq v(0)$).

\begin{lemma}
Let $ f$ be a truthful mechanism for extended multi-unit auctions with range at least $3$. $ f$ is an affine maximizer.
\end{lemma}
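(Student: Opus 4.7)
The approach is to reduce to the known characterization of truthful mechanisms for standard (monotone) multi-unit auctions from \cite{DN07b}, exploiting the inclusion $D' \subseteq D$. First I would restrict $f$ to the subdomain $D' \times D'$ of monotone inputs. Truthfulness on $D$ trivially restricts to truthfulness on $D'$, so the Dobzinski--Nisan characterization applies and yields weights $w_1, w_2$, a range $\mathcal{R}$, and constants $\{c_R\}$ such that on monotone inputs $f$ outputs $\arg\max_{R\in\mathcal{R}} w_1 v_1(R_1) + w_2 v_2(R_2) + c_R$. A minor preliminary is checking that the range of $f|_{D'\times D'}$ still contains at least three allocations: since any $v_i\in D$ can be turned into an almost-identical monotone valuation by resetting the single spike $v_i(k_i) := v_i(k_i-1)$, a standard weak-monotonicity argument shows that allocations reached from inputs in $D$ are also reached from inputs in $D'$.

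The bulk of the work is to extend the affine-maximizer formula from the monotone subdomain to all of $D\times D$. Given arbitrary $(v_1, v_2) \in D\times D$, the plan is to change one coordinate at a time: produce a monotone modification $\tilde v_1\in D'$ that agrees with $v_1$ at every quantity except possibly the spike point $k_1$, and invoke weak monotonicity on the single-agent deviation $(v_1, v_2)\to(\tilde v_1, v_2)$. Because $v_1$ and $\tilde v_1$ agree off $k_1$, the resulting W-MON inequality is trivial unless one of the two outcomes assigns bidder $1$ exactly $k_1$ items; in the benign case one concludes $f(v_1,v_2)=f(\tilde v_1,v_2)$, and a symmetric argument for bidder $2$ reduces the input to an element of $D'\times D'$ where the affine-maximizer formula is already established.

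The main obstacle is the boundary case in which the chosen allocation assigns some bidder his spike quantity $k_i$, since the monotone modification no longer agrees with the original valuation there. The crux is to argue that if $f$ failed to equal the affine maximizer on some $(v_1, v_2)$ involving $k_i$, then a bidder could profitably misreport his almost-monotone valuation as one of its monotone modifications (or vice versa), contradicting truthfulness. Closing this case carefully---by considering both the downward modification $\tilde v_i(k_i):=v_i(k_i-1)$ and the upward modification $\hat v_i(k_i):=v_i(k_i+1)$, and combining the two resulting W-MON constraints with the affine-maximizer formula already known on $D'\times D'$---is the technical heart of the reduction, and where I would expect the proof to spend most of its effort.
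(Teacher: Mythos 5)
Your first step coincides with the paper's: restrict to $D'\times D'$ and invoke the known characterization for multi-unit auctions in which all items are allocated (the paper attributes this to the Dobzinski--Sundararajan result \cite{DS08-add} rather than \cite{DN07b}, but that is immaterial). The extension to $D\times D$ is where your route diverges and where it has a genuine gap. You want to show that replacing $v_1$ by a monotone modification $\tilde v_1$ agreeing off $k_1$ preserves the outcome, via W-MON. Two problems. First, in your ``benign case'' the W-MON inequality between the two realized outcomes reads $0\ge 0$; a vacuously satisfied necessary condition does not let you conclude $f(v_1,v_2)=f(\tilde v_1,v_2)$ --- when W-MON is trivially satisfied it constrains nothing. (What is true is that the two outcomes are tied in bidder $1$'s utility against his menu, but that already uses the menu, not W-MON.) Second, and more seriously, the ``boundary case'' you defer is not a corner case but the entire content of the extension: when $v_1(k_1)$ is large, the affine argmax is the spike allocation and $f(v_1,v_2)$ genuinely differs from $f(\tilde v_1,v_2)$ for \emph{every} monotone modification, so no outcome-preserving reduction to $D'\times D'$ is possible there. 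Pinning down $f$ at such inputs requires comparing the spike allocation against every allocation in the range, whereas the two pairwise W-MON constraints you propose only compare it against whatever two outcomes the upward and downward modifications happen to produce.

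The missing idea is the taxation principle, which is how the paper closes exactly this case without any case analysis. The prices bidder $1$ faces depend only on $v_2$. For $v_2\in D'$, the affine-maximizer form on $D'\times\{v_2\}$ together with the richness of $D'$ forces these prices to be the weighted-VCG prices; the identical menu is then offered to bidder $1$ when $v_1\in D$, and a utility maximizer against weighted-VCG prices is by definition the affine argmax, spike or no spike. This gives the claim on $D\times D'$; repeating the argument for bidder $2$, whose prices depend only on $v_1$, gives it on $D\times D$. I would recommend replacing your perturbation argument with this one.
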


\begin{proof}
Consider restricting the domain of $ f$ to valuations in $ D'$ only. We claim that the in this case $ f$ implements an affine maximizer. I.e., there exists a subset of all allocations $ \mathcal R$, constants $ w_1,w_2$ and a constant $ c_R$ for each allocation $ R\in\mathcal R$ such that $ f(v_1,v_2)=\arg\max_{R\in \mathcal R} w_1v(1)+w_2v(2)+c_R$, for all valuations $ v_1,v_2\in D$. This follows from a paper with Sundararajan \cite{DS08-add}, as if both valuations are from $ D'$ this setting is equivalent to multi-unit auctions where all items are allocated.

The taxation principle says that in a truthful mechanism every bidder is faced with a price for each alternative (that depends only on the valuations of the other bidders), and that this bidder chooses its most profitable alternative. Since these prices depend only on the valuation of the other bidder we will say that these prices are induced by the other bidder. If both valuations are in $ D'$ then each bidder induces prices that are the same as weighted VCG (since in this case the implemented function is an affine maximizer). Since the prices that a bidder induces do not depend on the valuation of the other bidder, we have that if a bidder's valuation is in $ D$ then the prices he induces are the prices of weighted VCG (even if the other bidder valuation is in $ D'$).

Thus, if $ v_1\in D$ and $ v_2\in D'$, bidder $ 2$ will induce weighted VCG prices and bidder $ 1$ will choose the allocation that maximizes the weighted welfare. Hence, $ f(v_1,v_2)$ is an affine maximizer if $ v_1\in D$ and $ v_2\in D'$. Similarly, since the prices that a bidder induces depend only on his own valuation, we have that bidder $ 2$ always induces weighted VCG prices. We can now conclude that $ f$ is an affine maximizer. This concludes the proof of the lemma.
\end{proof}

\begin{lemma}
Let $ f$ be an affine maximizer for extended multi-unit auctions that provides bounded approximation ratio. The communication complexity of $ f$ is $ m$.
\end{lemma}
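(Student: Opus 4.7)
The plan is to show first that having a bounded approximation ratio forces the range $\mathcal{R}$ of $f$ to contain every feasible allocation $(t, m-t)$, and then to invoke the communication lower bound of \cite{DN07b} for affine maximizers over multi-unit auctions.

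For the first step I will exploit the ability of an extended valuation to place an arbitrary value $M$ on a single quantity $k_i \in \{1,\ldots,m\}$ while being zero elsewhere; one checks immediately that this satisfies the almost-monotonicity definition (both the ascending and descending portions of the required chain are all zeros). Fix any $t^* \in \{1,\ldots,m-1\}$ and consider the input in which bidder $1$ has $v_1(t^*) = M$ with $k_1 = t^*$ and bidder $2$ has $v_2(m-t^*) = M$ with $k_2 = m-t^*$, both valuations vanishing elsewhere. The only allocation of positive welfare is $(t^*, m-t^*)$, with welfare $2M$. If $(t^*, m-t^*)\notin \mathcal{R}$, then for every $R\in\mathcal{R}$ the affine objective $w_1 v_1(R_1) + w_2 v_2(R_2) + c_R$ reduces to $c_R$, a quantity independent of $M$; hence $f$ outputs an allocation of welfare $0$ while the optimum is $2M$, and letting $M\to\infty$ destroys any finite approximation ratio. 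The endpoints $(0,m)$ and $(m,0)$ are handled analogously using a monotone valuation concentrated at $m$. So $\mathcal{R}$ must contain all $m+1$ allocations $(t,m-t)$, $0\le t\le m$.

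For the second step I restrict $f$ to inputs from $D'$. On this subdomain $f$ is still an affine maximizer with the same full range $\mathcal{R}$ and, since $D'\subseteq D$, it still provides a bounded approximation. But this is exactly the standard multi-unit auction setting (with all items allocated), and the result of \cite{DN07b} shows that any affine maximizer of full range guaranteeing a bounded approximation in this setting requires $\Omega(m)$ communication (value queries). Since restricting the domain of inputs can only decrease communication complexity, the original $f$ also requires $\Omega(m)$ communication, which is the bound claimed by the lemma.

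The main obstacle is cleanly handling the weights $w_i$ and the constants $c_R$ in the range-must-be-full argument. Both $w_1, w_2$ must be strictly positive: if some $w_i=0$ then the objective ignores bidder $i$, and the same concentrated-value construction applied only to bidder $i$ produces an unbounded approximation ratio. With $w_1, w_2 > 0$ and the constants $c_R$ fixed, the weighted welfare term scales linearly in $M$ while the $c_R$ do not, so any $M$ exceeding $(\max_R |c_R|)/\min_i w_i$ makes the preceding argument go through cleanly. Once this is in place, the reduction to \cite{DN07b} is essentially immediate, which matches the paper's remark that the final step ``follows easily'' from that work.
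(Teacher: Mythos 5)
Your proof is correct and follows essentially the same route as the paper: concentrated almost-monotone valuations force the range to be full on pain of an unbounded approximation ratio, after which the known communication lower bound for full-range (affine-)maximization over multi-unit auctions applies. In fact you are somewhat more thorough than the paper, which proves only the maximal-in-range case with an informal ``$v_1(t)=v_2(m-t)=\infty$'' instance and asserts that extending to affine maximizers is easy, whereas you explicitly handle the weights $w_i$, the constants $c_R$, and the endpoint allocations.
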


\begin{proof}
We will prove this theorem only for maximal-in-range algorithms. Extending the proof to affine maximizers is easy. We first claim that if $ f$ provides a bounded approximation ratio then the range of $ f$ must be full. Suppose not, and let the allocation $ (t,m-t)$ be the allocation that is not in the range. Now consider the following instance: $ v_1$ and $ v_2$ are identically zero except for $ v_1(t)=v_2(m-t)=\infty$. We have that $ f$ does not provide a bounded a bounded approximation ratio, in contradiction to our assumption.

Thus, $ f$ is a maximal-in-range algorithm that has a full range, i.e., an algorithm that always outputs the optimal solution. Nisan and Segal \cite{NS06} show that optimally solving multi-unit auctions requires a communication complexity of $ m$, and their proof directly extends to case of extended multi-unit auctions. Thus the communication complexity of $ f$ is $ m$, as needed.
\end{proof}

All we have to show now is that there is an FPTAS for the problem. Consider the following algorithm: each bidder $ i$ rounds down each value to the nearest power of $ (1+\epsilon)$, and reports only the (logarithmic number of) bundles where his (rounded down) valuation changed. Each bidder also reports $ v_i(k_i)$. Now we use exhaustive search to find the optimal allocation that consists only of bundles that their value was reported. The algorithm finds a solution that recovers a fraction of at least $ 1-\frac 1 {(1+\epsilon)}$ of the optimal welfare.

\begin{theorem}
There exists a setting in which there is a (non-truthful) FPTAS, but truthful mechanisms cannot guarantee any bounded approximation ratio.
\end{theorem}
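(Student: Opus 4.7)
The plan is to combine the three ingredients already in hand: the FPTAS described immediately before the theorem statement, Lemma 1 (truthful mechanisms with range of size at least $3$ are affine maximizers), and Lemma 2 (affine maximizers with bounded approximation ratio have communication complexity $m$). The positive direction is essentially already given above: the rounding-based algorithm has each bidder transmit only the $O(\log_{1+\epsilon} V)$ bundles at which the rounded-down valuation changes, together with $v_i(k_i)$, and then exhaustively searches over combinations of the reported bundles in time polynomial in $\log m$ and $1/\epsilon$.

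For the negative direction, I would fix an arbitrary deterministic truthful mechanism $f$ whose running time (hence communication complexity) is polynomial in $\log m$ and split on the size of its range $\mathcal R$. If $|\mathcal R|\geq 3$, Lemma 1 says $f$ is an affine maximizer, and if $f$ also had bounded approximation ratio then Lemma 2 would force its communication complexity up to $m$, contradicting the polylogarithmic bound. If $|\mathcal R|\leq 2$, then since the set of valid splits $\{(t,m-t):0\leq t\leq m\}$ has size $m+1$, some interior $(t^*,m-t^*)\notin\mathcal R$ with $1\leq t^*\leq m-1$ is missing; I would then plug in the same kind of instance used inside the proof of Lemma 2, namely $v_1$ and $v_2$ identically $0$ except $v_1(t^*)=v_2(m-t^*)=M$ for arbitrarily large $M$, with $k_1=t^*$ and $k_2=m-t^*$. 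The optimum welfare is $2M$, while every allocation in $\mathcal R$ yields welfare $0$, giving an unbounded ratio as $M\to\infty$.

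Essentially all of the content is carried by the two lemmas, so the only remaining issue is verifying that the pathological instance used in the small-range case actually lies in the domain $D$. With the choices above, the almost-monotone chain $v_i(m)\geq\cdots\geq v_i(k_i+1)\geq v_i(k_i-1)\geq\cdots\geq v_i(1)$ collapses to $0\geq\cdots\geq 0$ while $v_i(k_i)=M$ is the permitted exempt entry, and $v_i(0)=0$ satisfies normalization, so both valuations are valid. I expect the main obstacle to be simply noticing that the case split is needed in the first place, since Lemma 1 is only available once $|\mathcal R|\geq 3$; once the small-range case is handled directly, the theorem is a short assembly of the two preceding lemmas together with the FPTAS construction.
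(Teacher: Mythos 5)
Your proposal is correct and follows the paper's own route: the paper proves this theorem precisely by assembling Lemma 3.1, Lemma 3.2, and the rounding-based FPTAS described just before the statement, leaving the assembly implicit. Your explicit case split on $|\mathcal R|\leq 2$ (which Lemma 3.1 does not cover) and your check that the pathological instance lies in the domain $D$ fill in details the paper glosses over, but the argument is the same one.
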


\section{The Power of Deterministic Poly-Time Mechanisms vs. Randomized ones}

Another question is to determine the power of randomization. The two main types of randomized mechanisms that are considered in algorithmic mechanism design are:

\begin{itemize}
    \item \textbf{Universally truthful mechanisms: }these mechanisms are a probability distribution over deterministic mechanisms.
    \item \textbf{Truthful in Expectation mechanisms: }the mechanism flips some random coins. The dominant strategy for a bidder that is interested in maximizing the \emph{expected} profit is to reveal his true valuation.
\end{itemize}

Notice that every universally truthful mechanism is also truthful in expectation.

The following is a poly time universally truthful mechanism that provides an approximation ratio of $ 2$ for extended multi-unit auctions: select a bidder $i$ uniformly at random and allocate him $ \arg\max (v_i(m),v_i(k_i))$ items. We leave the proof of the approximation ratio and truthfulness as an easy exercise to the reader. This proves that the gap between the power of poly time universally truthful and poly time deterministic truthful mechanisms is unbounded.

In a recent work with Dughmi \cite{DD09}, we show for the first time that poly time truthful-in-expectation mechanisms are strictly more powerful than universally truthful ones: there is a setting where universally truthful polynomial time mechanisms cannot provide an approximation ratio better than $ 2$, but there exists a truthful in expectation FPTAS. Increasing this gap remains an open question.

\bibliographystyle{plain}
\bibliography{bib}
\end{document}